\documentclass[aps,prl,reprint,longbibliography]{revtex4-2}
\usepackage{amsmath,amssymb,amsthm}
\usepackage{orcidlink}
\usepackage{graphicx}
\usepackage{thmtools}
\usepackage{thm-restate}
\usepackage{booktabs}
\usepackage{mathtools}
\usepackage{mathrsfs}
\usepackage{braket}
\usepackage{qcircuit}
\usepackage{placeins}
\usepackage{enumitem}
\usepackage[percent]{overpic}
\newcommand{\abs}[1]{\left|#1\right|}
\newcommand{\norm}[1]{\|#1\|}
\newcommand{\calH}{\mathcal{H}}
\newcommand{\tr}{\mathrm{tr}}
\let\originalsum\sum
\renewcommand{\sum}{\originalsum\nolimits}
\newcommand{\supp}{\mathrm{supp}}

\usepackage{hyperref}

\hypersetup{
    hidelinks,
    colorlinks=true, 
    linkcolor=blue, 
    citecolor=blue, 
    urlcolor=blue, 
   	final=true, 
}
\usepackage[capitalise,nameinlink]{cleveref}
\crefname{fact}{Fact}{Facts}
\newtheorem{proposition}{Proposition}
\newtheorem{lemma}{Lemma}
\newtheorem{theorem}{Theorem}

\newcommand{\rmE}{\mathrm{E}}
\newcommand{\rmH}{\mathrm{H}}
\newcommand{\rmh}{\mathrm{h}}
\newcommand{\rmi}{\mathrm{i}}

\begin{document}

\title{An uncertainty principle for entanglement}

\author{Mitali Nanda}
\author{Daochen Wang}
\email[Corresponding author: ]{wdaochen@gmail.com}
\affiliation{University of British Columbia}

\begin{abstract}
Consider two orthonormal bases of a bipartite Hilbert space such that states in one basis are weakly entangled and states in the other are strongly entangled. We establish a quantitative version of the following uncertainty principle: every state that is localized in one basis must be delocalized in the other. More specifically, we lower bound the sum of the Shannon entropies of any state’s coefficient distributions in the two bases by the difference between the entanglement Rényi entropies of their basis states. The result follows from new lower and upper bounds on the entanglement of superpositions that may be of independent interest. As an application, our result limits the power of weakly-entangled states in representing ground states of Hamiltonians with strongly-entangled energy eigenstates. We illustrate this application numerically for the Sachdev-Ye-Kitaev model.
\end{abstract}
\maketitle

The well-known uncertainty principle asserts that the outcomes of measuring the position and momentum of a state cannot both be very certain. In general, the same holds whenever measurement is performed in two bases that are related by a Fourier transform. In this work, we ask what happens when the two bases are arbitrary but have differing amounts of entanglement. We quantify our finding that the discrepancy in entanglement also induces uncertainty in the measurement outcomes.

For example, take any computational basis vector in $\mathbb{C}^2\otimes \mathbb{C}^2$, which is unentangled. When we measure this state in the maximally-entangled Bell basis, the outcome is uncertain. Conversely, the outcome of measuring any Bell-basis state in the standard basis is uncertain. Our main result provides a generalization of this observation to arbitrary bases. As an application, this result limits the power of weakly-entangled states in representing the ground states of Hamiltonians with strongly-entangled eigenstates, as we illustrate numerically for the Sachdev-Ye-Kitaev (SYK) model.

For a bipartite quantum state $\ket{\psi}$ and $\gamma \in [0,\infty]$, we write $\rmE_{\gamma}(\psi)$ for the entanglement $\gamma$-Rényi entropy of $\ket{\psi}$. When $\gamma = 1$, this is the usual entanglement entropy. For a probability distribution $p$, we write $\rmH(p)$ for the Shannon entropy of $p$. All logarithms have base $2$. We reserve bra-ket notation for unit vectors.

Our main result is the following theorem.

\begin{theorem}\label{thm:uncertainty}
Let $\alpha\in(1,\infty]$ and $\beta \in [1/2,1)$. Let $\{\ket{e_i}\}_i$ and $\{\ket{f_j}\}_j$ be orthonormal bases of a bipartite Hilbert space $\calH$. Suppose $\ket{\psi} \in \calH$ is a quantum state whose decompositions in these bases are given by
\begin{equation}
    \ket{\psi} = \sum_i a_i \ket{e_i} = \sum_j b_j \ket{f_j}.
\end{equation}
Then, writing $p_i \coloneqq \abs{a_i}^2$ and $q_j \coloneqq \abs{b_j}^2$, we have
\begin{equation*}
    \frac{\alpha}{\alpha-1}\rmH(p) + \frac{\beta}{1-\beta}\rmH(q) \geq  \sum_{i} p_i\, \rmE_{\alpha}(e_i) - \sum_{j} q_j\,  \rmE_{\beta}(f_j).
\end{equation*}
In particular, by setting $(\alpha,\beta) = (\infty,1/2)$, we have
\begin{equation}
    \rmH(p) + \rmH(q) \geq  \sum_i p_i\, \rmE_{\infty}(e_i) - \sum_j q_j\,  \rmE_{1/2}(f_j).
\end{equation}
\end{theorem}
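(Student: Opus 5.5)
The plan is to split the theorem into two one-sided bounds on the entanglement of the single state $\ket{\psi}$, and then chain them with the monotonicity of Rényi entropies in the order. Concretely, I aim to prove:
\begin{align}
\rmE_{\alpha}(\psi) &\geq \sum_i p_i\,\rmE_{\alpha}(e_i) - \frac{\alpha}{\alpha-1}\rmH(p), \quad \alpha\in(1,\infty], \label{eq:plan-lower}\\
\rmE_{\beta}(\psi) &\leq \sum_j q_j\,\rmE_{\beta}(f_j) + \frac{\beta}{1-\beta}\rmH(q), \quad \beta\in[1/2,1). \label{eq:plan-upper}
\end{align}
The Rényi entropy of the reduced state is nonincreasing in its order, and $\beta<1<\alpha$, so $\rmE_\alpha(\psi)\le\rmE_\beta(\psi)$. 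Chaining \eqref{eq:plan-lower} and \eqref{eq:plan-upper} through this inequality and rearranging gives the theorem. The case $(\alpha,\beta)=(\infty,1/2)$ is then immediate, because both prefactors equal $1$.

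\textbf{Setting up the norm formulation.} I identify a bipartite vector with its coefficient matrix $M$, so that $\rho_A=MM^\dagger$. Then
\[
\rmE_\gamma(\psi)=\frac{\gamma}{1-\gamma}\log\norm{\rho_A}_\gamma ,
\]
where $\norm{\cdot}_\gamma$ is the Schatten (quasi-)norm, and $\norm{\rho_i}_\gamma = 2^{-\frac{\gamma-1}{\gamma}\rmE_\gamma(e_i)}$.

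For \eqref{eq:plan-lower} I will use an operator inequality. For any weights $w_i>0$ with $\sum_i w_i=1$, Cauchy--Schwarz gives
\[
\ket{\psi}\!\bra{\psi}\le \sum_i \frac{p_i}{w_i}\,\ket{e_i}\!\bra{e_i}.
\]
Partial trace preserves this ordering, so $\rho_A\le\sum_i (p_i/w_i)\rho_{A,i}$. Since $X\mapsto \tr X^\alpha$ is monotone on positive operators and $\norm{\cdot}_\alpha$ is a norm for $\alpha\ge1$, this yields
\[
\norm{\rho_A}_\alpha\le\sum_i \frac{p_i}{w_i}\norm{\rho_{A,i}}_\alpha .
\]

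\textbf{The expected obstacle.} The hard step is converting the logarithm of this sum into an average of logarithms plus an entropy penalty. Concavity of $\log$ points the wrong way here. I would first try choosing the weights $w_i$ cleverly. If that fails, I would use the Gibbs variational formula
\[
\log\sum_i p_i z_i=\max_r\Bigl[\sum_i r_i\log z_i - D(r\Vert p)\Bigr],
\]
together with the fact that the $z_i$ themselves depend on $p$. Care is needed so that the penalty comes out as exactly $\frac{\alpha}{\alpha-1}\rmH(p)$, with the factor $\frac{\alpha}{\alpha-1}$ arising from the exponent relating $\norm{\cdot}_\alpha$ to $\rmE_\alpha$. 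If the operator route loses too much, a fallback is to work directly with the Schatten norm $\norm{M}_{2\alpha}$ and the triangle inequality on $M=\sum_i a_iE_i$, optimizing over a rescaling of the terms.

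\textbf{The upper bound.} For \eqref{eq:plan-upper}, $\beta<1$ makes $t\mapsto t^\beta$ operator concave. I would use this concavity to compare $\tr\rho_A^\beta$ with $\tr\rho_{A,j}^\beta$, where $\rho_A$ again has a cross-term expansion in the $\ket{f_j}$. The restriction $\beta\ge1/2$ should enter through the $2\beta$-Schatten quasi-norm of $M$ being $\ge1$-homogeneous-friendly: for $2\beta\ge1$ one has a reverse, convexity-type control on $\norm{\sum_j b_jF_j}_{2\beta}$. I expect the entropy term $\frac{\beta}{1-\beta}\rmH(q)$ to emerge from the same variational step as before, now with reversed sign.
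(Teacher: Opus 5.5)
\textbf{The proof fails at its intermediate step.} The two bounds you aim to prove, with $\rmE_\alpha(\psi)$ and $\rmE_\beta(\psi)$ on the left, are both false, so chaining them through $\rmE_\alpha(\psi)\le\rmE_\beta(\psi)$ cannot work. Take $\calH=\mathbb{C}^d\otimes\mathbb{C}^d$ and $\ket{\Phi}=d^{-1/2}\sum_x\ket{x}\ket{x}$. Let $\ket{\psi}=(\ket{\Phi}+\ket{0}\ket{1})/\sqrt2$, and let both bases contain $\ket{\Phi}$ and $\ket{0}\ket{1}$, so that $p=q=(1/2,1/2)$. The coefficient matrix is $M=2^{-1/2}(d^{-1/2}I+\ket{0}\!\bra{1})$.
\begin{itemize}
\item The top eigenvalue of $\rho_A$ is at least $\bra{0}MM^\dagger\ket{0}=(1+1/d)/2$. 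Hence $\rmE_\alpha(\psi)\le\frac{\alpha}{\alpha-1}\rmE_\infty(\psi)<\frac{\alpha}{\alpha-1}$. But the right side of your lower bound is $\frac12\log d-\frac{\alpha}{\alpha-1}$.
\item $M$ acts as $(2d)^{-1/2}I$ on the span of $\ket{2},\dots,\ket{d-1}$, so $\rho_A$ has $d-2$ eigenvalues equal to $1/(2d)$. Hence $\rmE_\beta(\psi)\ge\frac{1}{1-\beta}\log\bigl((d-2)(2d)^{-\beta}\bigr)\approx\log d-\frac{\beta}{1-\beta}$. But the right side of your upper bound is $\frac12\log d+\frac{\beta}{1-\beta}$.
\end{itemize}
Both inequalities fail for large $d$. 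Only the von Neumann entropy, $\rmE(\psi)=\frac12\log d+O(1)$, lies where both bounds can hold.

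This also explains why the obstacle you anticipated cannot be overcome. Your operator Cauchy--Schwarz bound, optimized over the weights $w_i$, reduces exactly to the triangle inequality $\norm{M}_{2\alpha}\le\sum_i\abs{a_i}\norm{M_i}_{2\alpha}$. No choice of weights or Gibbs-variational step can turn its logarithm into $\sum_i p_i\log(\norm{M_i}_{2\alpha}/\abs{a_i})$, because that is precisely the false inequality above. The vague operator-concavity plan for the upper bound fails for the same reason.

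\textbf{The missing idea.} You need an argument whose output is $\rmE(\psi)$ itself. The paper obtains it by differentiating a whole family of inequalities, not from a single endpoint inequality.
\begin{itemize}
\item Complex interpolation (the three-lines lemma) between the Parseval identity $\norm{M}_2^2=\sum_i p_i=1$ and the triangle inequality in $\norm{\cdot}_{2\alpha}$ gives, for every $r\in[2,2\alpha]$,
\[
\norm{M}_r^r\le\Bigl(\sum_i p_i^{u(r)/2}\norm{M_i}_{2\alpha}^{v(r)}\Bigr)^{r/u(r)}.
\]
\item This family holds with equality at $r=2$, so you may take the right derivative there. The left side gives $\ln 2\sum_\ell s_\ell^2\log s_\ell=-\frac{\ln 2}{2}\rmE(\psi)$.
\item Since $u(2)=2$, $v(2)=0$ and $u'(2)=-\frac{\alpha}{\alpha-1}$, the right side produces exactly the Shannon penalty $\frac{\alpha}{\alpha-1}\rmH(p)$. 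The coefficient comes from $u'(2)$, not from the exponent linking $\norm{\cdot}_\alpha$ to $\rmE_\alpha$.
\item The upper bound is the same argument on $r\in[2\beta,2]$ with a left derivative. The condition $\beta\ge 1/2$ is what makes $\norm{\cdot}_{2\beta}$ a genuine norm, so the triangle-inequality endpoint is available.
\end{itemize}
Your one-endpoint approach cannot reach this. The triangle inequality at each $r$ is not tight at $r=2$, since it only gives $1\le\sum_i\abs{a_i}$, so there is nothing to differentiate. Replace your pivots by $\rmE(\psi)$ and prove both bounds for it this way.
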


We prove \cref{thm:uncertainty} by considering the entanglement of superpositions, the study of which began in Refs.~\cite{entanglement_superposition_linden_2006,entanglement_superposition_gour_2007,entanglement_superposition_gour_roy_2008}. The question is simple: how can we bound the entanglement of a superposition of orthogonal states in terms of the entanglements of the states being superposed? A natural first guess might be that the former is simply an average of the latter. However, this is not true since we can superpose unentangled states to create entanglement, and entangled states to destroy entanglement. As an example, consider superposing $\ket{00}$ and $\ket{11}$ to obtain an EPR pair. Conversely, consider superposing two Bell-basis states to obtain $\ket{00}$. Therefore, any bound must account for the extent of the superposition. This extent controls the outcome uncertainty when the superposition is measured, which is why bounds on the entanglement of superpositions can yield a result like \cref{thm:uncertainty}.

Before going further, we note that in the special case $1/\alpha + 1/\beta = 2$, \cref{thm:uncertainty} follows from the main result in Ref.~\cite{entropic_rumin_2012} and Hölder's inequality for Schatten norms~\cite[Sec.~1.1]{tqi_watrous_2018}. Compared to that result, \cref{thm:uncertainty} is more general and offers a physical interpretation in terms of entanglement. Moreover, we prove the theorem via new lower and upper bounds on the entanglement of superpositions, which may be of independent interest. Indeed, Ref.~\cite{entropic_rumin_2012} sits squarely in the literature on entropic uncertainty relations \cite{coles_entropic_2017,uncertainty_frank_2012,uncertainty_maassen_uffink_1988,uncertainty_kraus_87,uncertainty_deutsch_83}, which has thus far not been related to work on the entanglement of superpositions. 

To prove both the lower and upper bounds, we use the next lemma, which follows from interpolation theorems in functional analysis \cite{functional_analysis_stein_2011,noncommutativelp_pisier_xu_2003}. In the lemma, we write $\norm{\cdot}_r$ for the Schatten-$r$ norm and $\langle \cdot, \cdot \rangle$ for the Hilbert-Schmidt inner product, that is, $\langle A, B\rangle \coloneqq \tr[A^\dagger B]$.

\begin{restatable}{lemma}{interpolation}\label{lem:interpolation}
    Let $M_1,\dots, M_k$ be complex matrices of the same size such that $\langle M_i, M_j \rangle = \delta_{i,j}$. Let $p\in [1,\infty)\setminus\{2\}$. Then, for all $c_1,\dots,c_k \in \mathbb{C}$, and all $r$ between $p$ and $2$,  that is, $\min(p,2) \leq r \leq \max(p,2)$, we have
    \begin{equation}\label{eq:interpolation}
        \Bigl\| \sum\limits_{i=1}^k c_i M_i \Bigr\|_r \leq  \Bigl(\sum\limits_{i=1}^k \abs{c_i}^{u(r)} \cdot \norm{M_i}_p^{v(r)}\Bigr)^{1/u(r)},
    \end{equation}
    where $u(r) \coloneqq \frac{r(2-p)}{p+r-pr}$ and $v(r) \coloneqq 2-u(r)$.
\end{restatable}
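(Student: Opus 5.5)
Our plan is to obtain \eqref{eq:interpolation} by complex (Riesz--Thorin type) interpolation between two endpoint estimates. We will interpolate on the Schatten scale and on weighted sequence spaces simultaneously.

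Fix the $M_i$ and define the linear map $T\colon (c_1,\dots,c_k)\mapsto \sum_i c_i M_i$. The first endpoint is $r=2$. Orthonormality of the $M_i$ in the Hilbert--Schmidt inner product gives
\[
\Bigl\|\sum_i c_i M_i\Bigr\|_2=\Bigl(\sum_i |c_i|^2\Bigr)^{1/2}.
\]
This matches the right-hand side of \eqref{eq:interpolation} at $r=2$, since $u(2)=\frac{2(2-p)}{2-p}=2$ and $v(2)=0$.

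The second endpoint is $r=p$. Here $u(p)=\frac{p(2-p)}{2p-p^2}=1$ and $v(p)=1$. The claimed bound is then $\|\sum_i c_iM_i\|_p\le\sum_i|c_i|\,\|M_i\|_p$, which is just the triangle inequality.

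We then rewrite both endpoints in weighted form. Set $w_i:=\|M_i\|_p$; if some $w_i=0$ then $M_i=0$, contradicting $\langle M_i,M_i\rangle=1$, so all $w_i>0$. Define the weighted space $\ell^s(\mu)$ with norm $\|x\|_{s,\mu}=(\sum_i|x_i|^s\mu_i)^{1/s}$. Substituting $c_i=x_i\,\lambda_i$ for suitable weights, we aim to express the target inequality as boundedness of $T$ from $\ell^{u}(\mu_u)$ to $S_r$ with norm at most $1$. The natural choice is to keep the $\ell^2$ norm unweighted at the $r=2$ endpoint and to use $\ell^1(w)$ at the $r=p$ endpoint. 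So $T$ is a contraction both from $\ell^2$ to $S_2$ and from $\ell^1(w)$ to $S_p$.

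The key step is to apply the Stein--Weiss interpolation theorem with change of measure (for Schatten classes, the noncommutative version in Pisier--Xu). Choose $\theta\in[0,1]$ with $\frac1r=\frac{1-\theta}{2}+\frac{\theta}{p}$. Complex interpolation gives $[S_2,S_p]_\theta=S_r$ and $[\ell^2(1),\ell^1(w)]_\theta=\ell^{s}(\mu)$, where
\[
\frac1s=\frac{1-\theta}{2}+\theta, \qquad \mu_i^{1/s}=w_i^{\theta}.
\]
Since both endpoint norms are at most $1$, this yields
\[
\|T c\|_r\le\Bigl(\sum_i |c_i|^s w_i^{\theta s}\Bigr)^{1/s}.
\]

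It remains to check that this matches \eqref{eq:interpolation}, namely that $s=u(r)$ and $\theta s=v(r)=2-u(r)$. From the definition of $\theta$ we get $\theta=\frac{(2-r)p}{r(2-p)}$. Then $\frac1s=\frac{1+\theta}{2}$. Substituting gives
\[
s=\frac{2r(2-p)}{r(2-p)+(2-r)p}=\frac{r(2-p)}{p+r-pr}=u(r).
\]
For the exponent on $w_i$, note that $\frac1s=\frac{1+\theta}{2}$ gives $s(1+\theta)=2$, so $\theta s=2-s=v(r)$. Both identities hold, and this is all consistent for $p<2$ as well as $p>2$: in either case $r$ lies between $p$ and $2$, so $\theta\in[0,1]$.

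The main obstacle is justifying the interpolation step rigorously. Two points need care. First, for $p<1$ one could not use complex interpolation, which is why $p\ge1$ is assumed. Second, the Stein--Weiss identification $[\ell^{2},\ell^1(w)]_\theta=\ell^s(w^{\theta s})$ must hold with equality of norms, i.e.\ constant $1$ rather than merely equivalent norms.

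If we want to avoid citing that result, we will prove the bound directly with the Hadamard three-lines argument. Normalize $\sum_i|c_i|^s w_i^{\theta s}=1$ and pick $Y$ with $\|Y\|_{r'}=1$ attaining $\|Tc\|_r=\langle Y,Tc\rangle$, where $r'$ is the conjugate exponent of $r$. Define analytic families $c_i(z)$ and $Y(z)$ via polar decomposition with exponents linear in $z$, and set
\[
F(z)=\bigl\langle Y(z),\,T c(z)\bigr\rangle.
\]
On the line $\mathrm{Re}\,z=0$, bound $|F|$ using the $\ell^2\to S_2$ estimate (orthonormality). On $\mathrm{Re}\,z=1$, bound it using the triangle inequality with Hölder's inequality $|\langle Y,M_i\rangle|\le\|Y\|_{p'}\|M_i\|_p$. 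The three-lines lemma then gives $|F(\theta)|\le1$, which is the claim.
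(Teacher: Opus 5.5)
Your proposal is correct and is essentially the paper's proof. It uses the same endpoints ($r=2$ via Hilbert--Schmidt orthonormality, $r=p$ via the triangle inequality) and the same parameter $\theta=\frac{p(2-r)}{r(2-p)}$ with $u(r)=2/(1+\theta)$ and $v(r)=\theta u(r)$; your fallback is the paper's three-lines argument exactly, with the abstract Stein--Weiss/Pisier--Xu route being just the packaged form of it, which the paper also cites. One small fix: with the sesquilinear pairing $\langle A,B\rangle=\tr[A^\dagger B]$, the exponents in $Y(z)$ must be linear in $\bar z$ rather than $z$, as in the paper's $A_z$ with $z^*$, for $F$ to be analytic.
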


\begin{proof}[Proof sketch] It is easy to see that the lemma holds at the endpoints $r\in\{p,2\}$: the case $r=2$ follows from Hilbert-Schmidt orthonormality, while the case $r=p$ follows from the triangle inequality for Schatten-$p$ norms. Appropriate interpolation gives the lemma for $r$ strictly between $p$ and $2$. See \cref{app:proof_interpolate} for the full proof.
\end{proof}

We start with the lower bound. 
\begin{proposition}\label{prop:es_lower}
    $\rmE(\psi) \geq \sum_i p_i\,  \rmE_{\alpha}(e_i) - \frac{\alpha}{\alpha-1}\rmH(p).$
\end{proposition}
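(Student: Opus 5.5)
Our plan is to pass to the coefficient-matrix picture, apply \cref{lem:interpolation} with $p=1$ to the superposition $\ket{\psi}=\sum_i a_i\ket{e_i}$, and then convert the resulting norm bound into entropies. Concretely, we identify each $\ket{e_i}\in\calH=\calH_A\otimes\calH_B$ with its coefficient matrix $M_i$, so that $\langle M_i,M_j\rangle=\delta_{i,j}$. The singular values of $M_i$ are the Schmidt coefficients of $\ket{e_i}$, so $\norm{M_i}_{2\gamma}^{2\gamma}=\tr\rho_i^\gamma$, where $\rho_i$ is the reduced state. Hence
\begin{equation*}
\rmE_\gamma(e_i)=\tfrac{1}{1-\gamma}\log\norm{M_i}_{2\gamma}^{2\gamma}.
\end{equation*}
In particular $\rmE_\alpha(e_i)=\frac{2\alpha}{1-\alpha}\log\norm{M_i}_{2\alpha}$ for $\alpha>1$, and likewise for $\psi$ with matrix $M=\sum_i a_iM_i$.

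We apply \cref{lem:interpolation} with $p=2\alpha\in(2,\infty)$ and $r=2\alpha'$ for some $r\in(2,p)$, or more simply with a family of $r$ tending to $2$. Since $\rmE(\psi)=\rmE_1(\psi)$ is the limit of $\rmE_{r/2}(\psi)$ as $r\to2$, we take $r$ between $2$ and $p$, expand the lemma to first order around $r=2$, and compare derivatives. At $r=2$ both sides equal $1$, since $\sum_i|a_i|^2=1$ and $u(2)=2$, $v(2)=0$. Writing $\log\norm{M}_r\le \frac1{u(r)}\log\sum_i|a_i|^{u(r)}\norm{M_i}_p^{v(r)}$, we differentiate in $r$ at $r=2^+$, using that both sides coincide at $r=2$.

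For the left side, $\frac{d}{dr}\log\norm{M}_r\big|_{r=2}=-\tfrac18\rmE(\psi)$, by the standard derivative of Rényi entropy at order one. For the right side, $u'(2)=\frac{(2-p)\cdot p-(2)(2-p)(1-p)}{p^2}$ evaluated at $r=2$; simplifying gives $u'(2)=\frac{2-p}{p}\cdot(\ldots)$, which is a routine computation we will carry out carefully. The derivative of $\frac1u\log\sum_i p_i^{u/2}\norm{M_i}_p^{2-u}$ at $u=2$ is
\begin{equation*}
\frac{u'(2)}{2}\sum_i p_i\Bigl(\tfrac12\log p_i-\log\norm{M_i}_p\Bigr).
\end{equation*}
This equals $\frac{u'(2)}{2}\bigl(-\tfrac12\rmH(p)+\tfrac{\alpha-1}{2\alpha}\sum_ip_i\rmE_\alpha(e_i)\bigr)$. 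Since the inequality holds for $r>2$ with equality at $r=2$, the one-sided derivatives compare. Provided $u'(2)<0$ for $p>2$, which we expect, multiplying through by the appropriate negative constant yields
\begin{equation*}
\rmE(\psi)\ge c\Bigl(\tfrac{\alpha-1}{\alpha}\sum_ip_i\rmE_\alpha(e_i)-\rmH(p)\Bigr)
\end{equation*}
for an explicit constant $c$. The target inequality is exactly the case $c=\frac{\alpha}{\alpha-1}$, so the constants must match. The case $\alpha=\infty$ is handled by taking the limit $\alpha\to\infty$, or by using the operator norm in the lemma through continuity.

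The main obstacle is this constant bookkeeping. We must get $u'(2)$ and the Rényi derivative normalizations right so that exactly the factor $\frac{\alpha}{\alpha-1}$ appears, and we must justify the derivative comparison, including the sign of $u'(2)$ and differentiability of $r\mapsto\norm{M}_r$. If the constants do not match cleanly, the fallback is to avoid derivatives. We would instead apply the lemma at a fixed $r$ and use monotonicity in the Rényi index, $\rmE_1\ge\rmE_{r/2}$ for $r\le 2$, choosing $p$ on the other side of $2$ appropriately, and then send parameters to their limits.
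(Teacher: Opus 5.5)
Your route is the paper's: apply \cref{lem:interpolation} with $p=2\alpha$ (your opening ``$p=1$'' must be a slip), use equality at $r=2$, compare right derivatives there, and get $\alpha=\infty$ by letting $\alpha\to\infty$. The lemma does not cover $p=\infty$, so use the limit, not the operator norm. Differentiability is not an issue: both sides are finite sums of $s_\ell^r$ and $p_i^{u(r)/2}\norm{M_i}_{2\alpha}^{v(r)}$ with positive bases. The paper differentiates $\norm{M}_r^r$ instead of $\log\norm{M}_r$, which only rescales both sides equally at $r=2$.

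The constant bookkeeping is wrong as written, and you should compute $c$ rather than infer it from the target. First, $\frac{d}{dr}\log\norm{M}_r\big|_{r=2}=\frac12\sum_\ell s_\ell^2\log s_\ell=-\frac14\rmE(\psi)$, not $-\frac18\rmE(\psi)$. Second, $u'(r)=\frac{p(2-p)}{(p+r-pr)^2}$, so $u'(2)=\frac{p}{2-p}=-\frac{\alpha}{\alpha-1}<0$. Your (correct) derivative of the right-hand side then gives
\begin{equation*}
-\tfrac14\rmE(\psi)\le\tfrac{u'(2)}{4}\Bigl(\tfrac{\alpha-1}{\alpha}\textstyle\sum_i p_i\rmE_\alpha(e_i)-\rmH(p)\Bigr),
\end{equation*}
so $c=-u'(2)=\frac{\alpha}{\alpha-1}$ exactly. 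With your $-\frac18$ you would instead get $c=\frac{2\alpha}{\alpha-1}$, which is false. To see this, take $\psi=e_1$ maximally entangled in $\mathbb{C}^d\otimes\mathbb{C}^d$; the inequality would read $\log d\ge 2\log d$.

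Drop the fallback. Rényi monotonicity gives $\rmE_1\ge\rmE_{r/2}$ only for $r\ge2$; you have it reversed. Moreover, at a fixed $r>2$ the lemma leaves a term $\log\sum_i p_i^{u/2}\norm{M_i}_{2\alpha}^{v}$ with a negative prefactor, so it must be bounded above. Jensen bounds it below. The Shannon entropy and the average $\sum_i p_i\rmE_\alpha(e_i)$ only emerge after linearizing at $r=2$, which is exactly the derivative argument.
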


\begin{proof}
    It suffices to prove the claim for finite $\alpha>1$; the case $\alpha=\infty$ follows by taking $\alpha\to\infty$.

    Let $M$ and $M_i$ be the coefficient matrices of $\vert{}\psi\rangle$ and $\vert{}e_i\rangle$, respectively, in an arbitrary but fixed product basis. Then
    \begin{equation}
    M = \sum_i a_i M_i \quad \text{and} \quad \langle M_i, M_j \rangle = \langle e_i \vert{} e_j \rangle = \delta_{ij}
    \end{equation}
    Let $p_i = \vert{}a_i\vert{}^2$. Let $\{s_\ell\}_\ell$ be the singular values of $M$. Applying \cref{lem:interpolation} with interpolation parameter $p = 2\alpha$ shows that, for all $r \in [2, 2\alpha]$,
    \begin{equation}\label{eq:lower_derivative_start}
        \sum_\ell s_\ell^r = \Vert{}M\Vert{}_r^r \le \Bigl( \sum_i p_i^{u(r)/2} \Vert{}M_i\Vert{}_{2\alpha}^{v(r)} \Bigr)^{r/u(r)}.
    \end{equation}
    Both sides equal $1$ at $r=2$. Since \cref{eq:lower_derivative_start} holds for $r\geq2$, taking right derivatives at $r=2$ and dividing by $\ln 2$ gives
    \begin{align} -\frac{1}{2} \rmE(\psi) &= \sum_\ell s_\ell^2 \log s_\ell \notag \\ &\le \frac{\alpha}{\alpha-1} \sum_i p_i \Bigl( -\frac{1}{2} \log p_i + \log \Vert{}M_i\Vert{}_{2\alpha} \Bigr) \notag \\ &= \frac{\alpha}{\alpha-1} \Bigl( \frac{1}{2} \rmH(p) + \sum_i p_i \log \Vert{}M_i\Vert{}_{2\alpha} \Bigr). \end{align}
    Since $-\log \Vert{}M_i\Vert{}_{2\alpha} = \frac{\alpha-1}{2\alpha} \rmE_\alpha(e_i)$, rearranging the preceding inequality gives the claimed result.
\end{proof}

We now turn to the upper bound.

\begin{proposition}\label{prop:es_upper}
    $\rmE(\psi) \leq \sum_j q_j\,  \rmE_{\beta}(f_j) + \frac{\beta}{1-\beta}\rmH(q).$
\end{proposition}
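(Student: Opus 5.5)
The plan is to mirror the proof of \cref{prop:es_lower}, but with interpolation parameter $p=2\beta\in[1,2)$. Since $r$ now ranges over $[2\beta,2]$, the point $r=2$ is the \emph{right} endpoint. Taking a left derivative there reverses the inequality and yields an upper bound on $\rmE(\psi)$.

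\textbf{Setup.} Let $N$ and $N_j$ be the coefficient matrices of $\ket{\psi}$ and $\ket{f_j}$ in a fixed product basis. Then $N=\sum_j b_j N_j$ and $\langle N_i,N_j\rangle=\delta_{ij}$. Let $\{s_\ell\}_\ell$ be the singular values of $N$. Applying \cref{lem:interpolation} with $p=2\beta$ gives, for all $r\in[2\beta,2]$,
\begin{equation*}
    F(r)\coloneqq\sum_\ell s_\ell^r \le \Bigl(\sum_j q_j^{u(r)/2}\norm{N_j}_{2\beta}^{v(r)}\Bigr)^{r/u(r)}\eqqcolon G(r).
\end{equation*}
Indices with $q_j=0$ contribute nothing and may be dropped. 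At $r=2$ we have $u(2)=2$ and $v(2)=0$, so $F(2)=G(2)=\sum_j q_j=1$.

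\textbf{Left derivative at $r=2$.} Since $F(r)\le G(r)$ for $r<2$ with equality at $r=2$, the left derivatives satisfy $F'(2^-)\ge G'(2^-)$. Both functions are smooth near $r=2$: $F$ is a finite sum of exponentials over the nonzero $s_\ell$, and $G$ is built from positive bases raised to smooth exponents. Hence these are ordinary derivatives.

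On the left, $F'(2)=\sum_\ell s_\ell^2\ln s_\ell$, which equals $-\tfrac{\ln 2}{2}\rmE(\psi)$.

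On the right, differentiate $\ln G(r)=\tfrac{r}{u(r)}\ln S(r)$, where $S(r)$ denotes the inner sum. Because $S(2)=1$, only the derivative of $S$ survives, giving $G'(2)=\tfrac{2}{2}\,S'(2)$. A direct computation gives $u'(r)=\tfrac{p(2-p)}{(p+r-pr)^2}$, so $u'(2)=\tfrac{p}{2-p}=\tfrac{\beta}{1-\beta}$. Using $v'=-u'$,
\begin{equation*}
    S'(2)=\frac{\beta}{1-\beta}\sum_j q_j\Bigl(\tfrac12\ln q_j-\ln\norm{N_j}_{2\beta}\Bigr).
\end{equation*}

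\textbf{Conclusion.} Divide by $\ln 2$ and use $\log\norm{N_j}_{2\beta}=\tfrac{1}{2\beta}\log\tr\rho_j^\beta=\tfrac{1-\beta}{2\beta}\rmE_\beta(f_j)$, where $\rho_j$ is the reduced state of $\ket{f_j}$. This gives
\begin{equation*}
    -\tfrac12\rmE(\psi)\ \ge\ -\frac{\beta}{2(1-\beta)}\rmH(q)-\frac12\sum_j q_j\rmE_\beta(f_j).
\end{equation*}
Multiplying by $-2$ yields the claim. The case $\beta=1/2$ (that is, $p=1$) is allowed by \cref{lem:interpolation} and needs no limiting argument.

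\textbf{Main obstacle.} The delicate points are bookkeeping rather than conceptual. One must get the direction of the one-sided derivative comparison right: the interval lies to the left of $r=2$, so the inequality between derivatives flips relative to \cref{prop:es_lower}. One must also correctly compute $u'(2)$ and the normalization relating $\norm{N_j}_{2\beta}$ to $\rmE_\beta$. I would double-check the sign by testing on $\ket{\psi}=\ket{f_1}$, where the bound should reduce to $\rmE_1(f_1)\le\rmE_\beta(f_1)$, which holds by monotonicity of Rényi entropies in $\beta$.
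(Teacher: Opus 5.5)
Your proposal is correct and follows the paper's proof essentially step for step. The paper also applies \cref{lem:interpolation} with $p=2\beta$, takes the left derivative at $r=2$ where both sides equal $1$, and converts using $\log\norm{M_j}_{2\beta}=\frac{1-\beta}{2\beta}\rmE_\beta(f_j)$. Your explicit computation $u'(2)=\beta/(1-\beta)$, the remark on smoothness, and the handling of indices with $q_j=0$ fill in details the paper leaves implicit.
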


\begin{proof}
    Let $M$ and $M_j$ be the coefficient matrices of $\vert{}\psi\rangle$ and $\vert{}f_j\rangle$, respectively, in an arbitrary but fixed product basis. Then
    \begin{equation}
    M = \sum_j b_j M_j \quad \text{and} \quad \langle M_i, M_j \rangle = \langle f_i \vert{} f_j \rangle = \delta_{ij}.
    \end{equation}
    Let $q_j = \vert{}b_j\vert{}^2$. Let $\{s_\ell\}_\ell$ be the singular values of $M$. Applying \cref{lem:interpolation} with interpolation parameter $p = 2\beta$ shows that, for all $r \in [2\beta, 2]$,
    \begin{equation}\label{eq:upper_derivative_start}
        \sum_\ell s_\ell^r = \Vert{}M\Vert{}_r^r \le \Bigl( \sum_j q_j^{u(r)/2} \Vert{}M_j\Vert{}_{2\beta}^{v(r)} \Bigr)^{r/u(r)}.
    \end{equation}
    Both sides equal $1$ at $r=2$. Since \cref{eq:upper_derivative_start} holds for $r\leq2$, taking left derivatives at $r=2$ and dividing by $\ln 2$ gives
    \begin{align} -\frac{1}{2} \rmE(\psi) &= \sum_\ell s_\ell^2 \log s_\ell \notag \\ &\ge \frac{\beta}{1-\beta} \sum_j q_j \Bigl( \frac{1}{2} \log q_j - \log \Vert{}M_j\Vert{}_{2\beta} \Bigr) \notag \\ &= \frac{\beta}{1-\beta} \Bigl( -\frac{1}{2} \rmH(q) - \sum_j q_j \log \Vert{}M_j\Vert{}_{2\beta} \Bigr).
    \end{align}
    Since $\log \Vert{}M_j\Vert{}_{2\beta} = \frac{1-\beta}{2\beta} \rmE_\beta(f_j)$, rearranging the preceding inequality gives the claimed result.
\end{proof}

\begin{proof}[Proof of \cref{thm:uncertainty}]
Combining the inequalities in \cref{prop:es_lower,prop:es_upper} yields the theorem.
\end{proof}

\paragraph{Tightness of inequalities.} We can identify several regimes in which \cref{prop:es_lower,prop:es_upper} are tight. 

First, when $(\alpha,\beta) = (\infty, 1/2)$, take $\calH = \mathbb{C}^d \otimes \mathbb{C}^d$ and $\{\ket{e_i}\}_i$ a maximally-entangled basis and $\{\ket{f_j}\}_j$ a product-state basis, i.e., for $i=(i_1,i_2)$ and $j=(j_1,j_2)$:
\begin{equation*}
    \ket{e_i} \coloneqq d^{-1/2} \textstyle\sum_{x = 0}^{d-1} \omega^{i_1 x}\ket{x}\ket{x + i_2} \ \text{and} \ 
    \ket{f_j} \coloneqq \ket{j_1} \ket{j_2},
\end{equation*}
where $\omega \coloneqq \exp(2\pi \rmi /d)$ and $x+i_2$ is computed mod $d$, then take 
\begin{equation*}
\ket{\psi_1} \coloneqq \ket{0}\ket{0} \quad \text{and} \quad \ket{\psi_2} \coloneqq d^{-1/2} \textstyle\sum_{j=0}^{d-1} \ket{j}\ket{j}.
\end{equation*}
Then we have
\begin{align}
    \rmE(\psi_1) =&  \textstyle\sum_i p_i \, \rmE_{\infty}(e_i) - \rmH(p) = 0,
    \\
    \rmE(\psi_2) =&  \textstyle\sum_j q_j \, \rmE_{1/2}(f_j) + \rmH(q) = \log d.
\end{align}

Second, when $\alpha,\beta \to 1$, take the bases $\{\ket{e_i}\}_i$ and $\{\ket{f_j}\}_j$ to contain $\ket{\psi}$. Tightness follows since $\rmE_{\gamma} \to \rmE$  as $\gamma \to 1$. This regime raises a natural question: what happens when $\alpha=\beta = 1$? The next proposition shows that the blowup of the bounds in \cref{prop:es_lower,prop:es_upper} is unavoidable at this limit.
\begin{proposition}\label{prop:tightness}
    Fix constants $A,B>0$. There exist $\ket{\psi_1},\ket{\psi_2}\in \calH \coloneqq \mathbb{C}^d \otimes \mathbb{C}^d$ and orthonormal bases $\{\ket{e_i}\}_i$ and $\{\ket{f_j}\}_j$ of $\calH$ such that
    \begin{alignat*}{3}
        &\rmE(\psi_1) = 0 \ &&\textup{but} \ &&\sum_i p_i\rmE(e_i) - A \cdot \rmH(p) \geq \Omega(\log d),
        \\
        &\rmE(\psi_2) = \log d \ &&\textup{but} \ &&\sum_j q_j\rmE(f_j) + B \cdot \rmH(q) \leq O(\log\log d),
    \end{alignat*}
    where $p_i \coloneqq \abs{\braket{\psi_1|e_i}}^2$ and $q_j\coloneqq \abs{\braket{\psi_2|f_j}}^2$.
\end{proposition}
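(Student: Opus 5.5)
Both claims will be proved by explicit constructions inside the span of the ``diagonal'' vectors $\ket{m}\ket{m}$. Such a span is convenient because a vector $\sum_m c_m\ket{m}\ket{m}$ is already in Schmidt form, so its entanglement entropy equals the Shannon entropy of $(\abs{c_m}^2)_m$. In each case I only need to specify the few basis vectors that carry all the weight of $\ket{\psi_1}$ or $\ket{\psi_2}$. I then complete them to an orthonormal basis of $\calH$ arbitrarily, since basis vectors with $p_i=0$ or $q_j=0$ do not contribute to either side.

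\emph{First claim.} Let $\ket{\chi}\coloneqq (d-1)^{-1/2}\sum_{m=1}^{d-1}\ket{m}\ket{m}$, which is orthogonal to $\ket{00}$, and set
\begin{equation*}
\ket{e_1}\coloneqq \tfrac{1}{\sqrt2}\bigl(\ket{00}+\ket{\chi}\bigr),\qquad \ket{e_2}\coloneqq \tfrac{1}{\sqrt2}\bigl(\ket{00}-\ket{\chi}\bigr).
\end{equation*}
Take $\ket{\psi_1}=\ket{00}=\tfrac1{\sqrt2}(\ket{e_1}+\ket{e_2})$. Then $\rmE(\psi_1)=0$, $p=(1/2,1/2)$ and $\rmH(p)=1$. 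Each $\ket{e_i}$ has Schmidt coefficients $1/2$ and $1/(2(d-1))$ (the latter with multiplicity $d-1$), so $\rmE(e_i)=1+\tfrac12\log(d-1)$. Hence
\begin{equation*}
\sum_i p_i\rmE(e_i)-A\rmH(p)=1+\tfrac12\log(d-1)-A,
\end{equation*}
which is $\Omega(\log d)$ for fixed $A$.

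\emph{Second claim.} Let $\ket{\psi_2}=d^{-1/2}\sum_{m=0}^{d-1}\ket{m}\ket{m}$, so $\rmE(\psi_2)=\log d$. Fix an integer $k<d$, to be chosen of order $\log d$. Let $\ket{g_l}\coloneqq\ket{l}\ket{l}$ for $l=0,\dots,k-1$, and let $\ket{g_k}\coloneqq (d-k)^{-1/2}\sum_{m=k}^{d-1}\ket{m}\ket{m}$. These $k+1$ vectors are orthonormal, and
\begin{equation*}
\ket{\psi_2}=d^{-1/2}\sum_{l<k}\ket{g_l}+\sqrt{(d-k)/d}\,\ket{g_k}
\end{equation*}
lies in their span. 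Take the Fourier-transformed vectors
\begin{equation*}
\ket{f_j}\coloneqq (k+1)^{-1/2}\sum_{l=0}^{k}\omega_{k+1}^{jl}\ket{g_l},\qquad j=0,\dots,k.
\end{equation*}
These form an orthonormal basis of the same span, so $q$ is supported on at most $k+1$ indices and $\rmH(q)\le\log(k+1)$. Each $\ket{f_j}$ has Schmidt coefficients $1/(k+1)$ with multiplicity $k$ and $1/((k+1)(d-k))$ with multiplicity $d-k$. Therefore
\begin{equation*}
\rmE(f_j)=\log(k+1)+\frac{\log(d-k)}{k+1}.
\end{equation*}
Consequently
\begin{equation*}
\sum_j q_j\rmE(f_j)+B\rmH(q)\le (1+B)\log(k+1)+\frac{\log d}{k+1}.
\end{equation*}
Choosing $k=\lceil\log d\rceil$ makes this $O(\log\log d)$ for fixed $B$.

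\emph{Main difficulty.} The hard step is finding the second construction. The naive decomposition $\ket{\psi_2}=d^{-1/2}\sum_j\ket{jj}$ has product basis vectors but $\rmH(q)=\log d$. At the other extreme, putting $\ket{\psi_2}$ itself in the basis gives $\rmE(f_j)=\log d$. The resolution is to balance the two costs: group all but $k$ diagonal terms into one block, and take a $(k+1)$-dimensional Fourier basis. This trades $\rmH(q)\approx\log k$ against an entanglement $\approx\log k+\log d/k$, and $k\approx\log d$ optimizes the sum. Once the construction is fixed, the remaining verifications are the routine Schmidt-coefficient computations above.
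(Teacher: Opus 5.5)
Your proof is correct, and it takes a different route from the paper.

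\emph{The paper's route.} The paper proves both claims with one perturbation lemma (\cref{lem:tightness}). Starting from an orthonormal family $\ket{u_i}$ with equal overlaps with the target state, it builds the orthonormal vectors $\ket{v_i}=\ket{u_i}-\bar u+k^{-1/2}\ket{\psi}$. Their uniform superposition is exactly $\ket{\psi}$, so $\rmH(p)=\rmH(q)=\log k$. For the first claim it takes the $\ket{u_i}$ to be Fourier-phased maximally entangled states, and for the second the product states $\ket{jj}$. It sets $k=\lceil\log d\rceil$ in both. It then only \emph{bounds} the entanglement of the perturbed vectors, using the amplitude estimate $\abs{\braket{jj|e_i}}^2\le 4/d$ and the Fano-type inequality $\rmH(r)\le\rmh_2(1-r_i)+(1-r_i)\log d$.

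\emph{Your route.} You stay in the diagonal span of the $\ket{mm}$, where every vector is already in Schmidt form. So all entropies are computed exactly, and you need neither the lemma nor a continuity estimate. Your first construction also shows that the first claim needs no growing $k$: two basis vectors suffice, and the paper's choice $k=\lceil\log d\rceil$ is really dictated by the second claim. Your second construction replaces the perturbation of the $\ket{jj}$ by a $(k+1)$-point Fourier transform over $k$ diagonal product states plus one lumped tail vector. That vector is a clean way to balance $\rmH(q)$ against $\rmE(f_j)$.

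\emph{Trade-offs.} Your basis states are less extreme than the paper's. Your $\ket{e_i}$ carry about $\tfrac12\log d$ of entanglement rather than nearly $\log d$. Your $\ket{f_j}$ carry about $\log\log d$ rather than $O(1)$. Neither difference affects the stated asymptotics. What the paper's lemma buys is a reusable mechanism: any reference family with equal overlaps can be bent so that a prescribed state becomes an exact uniform superposition, with each new basis vector within $2k^{-1/2}$ in norm of its reference vector. Your argument buys elementary, exact, self-contained computations.
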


Note that $\log d$ is the maximum possible value for the entanglement entropy of a state in $\calH$. Therefore the proposition suggests that there is no simple modification to \cref{prop:es_lower,prop:es_upper} to allow for $\alpha=\beta = 1$.

We will prove \cref{prop:tightness} by constructing a ``small'' superposition of highly entangled states (with respect to $\rmE_1$) that is unentangled, and  a ``small'' superposition of weakly entangled states that is maximally entangled. Both constructions employ the next lemma.
\begin{lemma}\label{lem:tightness}
    Let $\ket{\psi} \in \mathbb{C}^D$. Let $\ket{u_0},\dots,\ket{u_{k-1}}\in \mathbb{C}^D$ be orthonormal. Write $\bar{u}\coloneqq k^{-1}\sum_i \ket{u_i}$.  Suppose that $\braket{u_i|\psi}$ is  independent of $i$. Then, the vectors
    \begin{equation}\label{eq:lem_tightness}
        \ket{v_i} \coloneqq  \ket{u_i} - \bar{u} + k^{-1/2}\ket{\psi}, \quad i = 0,\dots, k-1, 
    \end{equation}
    are orthonormal, and $\ket{\psi} = k^{-1/2}\sum_{i=0}^{k-1}\ket{v_i}$.
\end{lemma}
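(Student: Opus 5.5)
The plan is a direct computation of the Gram matrix of the vectors $\ket{v_i}$, followed by summing the defining equation \cref{eq:lem_tightness} over $i$. No earlier result is needed. The only hypothesis that matters beyond orthonormality of the $\ket{u_i}$ is that $c \coloneqq \braket{u_i|\psi}$ is independent of $i$. It will be used to show that the ``centered'' vectors are orthogonal to $\ket{\psi}$.

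First, I split each vector as $\ket{v_i} = \ket{w_i} + k^{-1/2}\ket{\psi}$, where $\ket{w_i} \coloneqq \ket{u_i} - \bar u$ is not a unit vector. From the orthonormality of the $\ket{u_i}$ we get
\begin{equation*}
\braket{u_i|\bar u} = k^{-1} \quad\text{and}\quad \langle \bar u,\bar u\rangle = k^{-2}\cdot k = k^{-1}.
\end{equation*}
Hence
\begin{equation*}
\langle w_i, w_j\rangle = \delta_{ij} - k^{-1} - k^{-1} + k^{-1} = \delta_{ij} - k^{-1}.
\end{equation*}

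Second, I use the hypothesis: $\langle \bar u,\psi\rangle = k^{-1}\sum_i \braket{u_i|\psi} = c$. Therefore $\langle w_i,\psi\rangle = c - c = 0$ for every $i$, and likewise $\langle \psi, w_j\rangle = 0$. The cross terms thus vanish when $\braket{v_i|v_j}$ is expanded. This gives
\begin{equation*}
\braket{v_i|v_j} = \langle w_i,w_j\rangle + k^{-1}\braket{\psi|\psi} = \delta_{ij} - k^{-1} + k^{-1} = \delta_{ij},
\end{equation*}
since $\ket{\psi}$ is a unit vector. So the $\ket{v_i}$ are orthonormal.

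Finally, summing \cref{eq:lem_tightness} over $i$ gives $\sum_i \ket{u_i} - k\bar u + k^{1/2}\ket{\psi} = k^{1/2}\ket{\psi}$. Dividing by $k^{1/2}$ yields $\ket{\psi} = k^{-1/2}\sum_i \ket{v_i}$.

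I expect no real obstacle. The only point needing care is the orthogonality $\langle w_i,\psi\rangle = 0$, which is exactly where constancy of $\braket{u_i|\psi}$ enters. Without it, cross terms of the form $k^{-1/2}(\braket{u_i|\psi} - c)$ would survive, and the Gram matrix would fail to be the identity.
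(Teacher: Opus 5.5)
Your proposal is correct and carries out the direct calculation the paper leaves implicit (its proof reads only ``Follows by direct calculation''). It splits $\ket{v_i}$ into $(\ket{u_i}-\bar u) + k^{-1/2}\ket{\psi}$, uses constancy of $\braket{u_i|\psi}$ to make the cross terms vanish, and uses that $\ket{\psi}$ is a unit vector to restore the diagonal.
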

\begin{proof}
    Follows by direct calculation.
\end{proof}

\begin{proof}[Proof of \cref{prop:tightness}]
    Write $\ket{i}$ for the $(i+1)$th standard basis vector in $\mathbb{C}^d$. Write $\omega \coloneqq \exp(2\pi \rmi/d)$.  
    
    Let $k \in \{1,\dots,d\}$ be an integer to be chosen later. Then, we instantiate \cref{lem:tightness} in the following two ways.
    \begin{enumerate}[leftmargin=*]
        \item Set $\ket{\psi_1}\coloneqq \ket{0}\ket{0}$, and $\ket{u_i} \coloneqq d^{-1/2}{\sum_{j=0}^{d-1} \omega^{ij}}\ket{j}\ket{j}$ for $0\leq i \leq k-1$. 
        Clearly $\rmE(\psi_1) = 0$.
        
        Now set $\ket{e_i} \coloneqq \ket{u_i} - \bar{u} +k^{-1/2}\ket{\psi_1}$ as in \cref{eq:lem_tightness}. By \cref{lem:tightness}, these $\ket{e_i}$s are orthonormal and $\rmH(p) = \log k$. Each $\ket{e_i}$ also has high entanglement because $\abs{\braket{00|e_i}}^2 = 1/k$ and $\abs{\braket{jj|e_i}}^2 \leq 4/d$ for $j>0$, and so
        \begin{equation*}
            \rmE(e_i) \geq \sum_{j>0} \abs{\braket{jj|e_i}}^2 \log(d/4) =  (1-1/k)\log(d/4).
        \end{equation*}
        
        \item Set $\ket{\psi_2}\coloneqq d^{-1/2}\sum_{j=0}^{d-1} \ket{j}\ket{j}$, and $\ket{u_j} \coloneqq \ket{j}\ket{j}$ for $0\leq j \leq k-1$. Clearly $\rmE(\psi_2) = \log d$. 
        
         Now set $\ket{f_j} \coloneqq \ket{u_j} - \bar{u} +k^{-1/2}\ket{\psi_2}$ as in \cref{eq:lem_tightness}. By \cref{lem:tightness}, these $\ket{f_j}$s are orthonormal and $\rmH(q) = \log k$. To see that each $\ket{f_j}$ has low entanglement, we use $\rmH(r) \leq \rmh_2(1-r_{i}) + (1-r_i) \log d$ for any distribution $r$ supported on $d$ points and any index $i$, where $\rmh_2(\cdot)$ is the binary entropy function. Therefore, writing  $\delta \coloneqq 1-\abs{\braket{jj|f_j}}^2 =  1-(1-1/k+1/\sqrt{kd})^2 \leq 2/k$, we have
        \begin{equation*}
            \rmE(f_j) \leq \max_{\delta \in [0,2/k]} \bigl(\rmh_2(\delta)+ \delta \log d\bigr) \leq 1+\frac{2}{k}\log d.
        \end{equation*}
    \end{enumerate}
    Choosing $k \coloneqq \lceil \log d \rceil$ and extending $\{\ket{e_i}\}_i$ and $\{\ket{f_j}\}_j$ to orthonormal bases yields the proposition.
\end{proof}

In fact, the constructions in the proof also show that \cref{prop:es_lower,prop:es_upper} are sharp in their coefficients in front of the Shannon entropies. This is because, to leading order, we have $\rmE_\alpha(e_i) = \frac{\alpha}{\alpha-1} \log k$ and $\rmE_\beta(f_j) = \log d - \frac{\beta}{1-\beta} \log k$ by direct calculation.

\paragraph{Application to ground state computations.} 
\cref{thm:uncertainty} quantitatively limits the power of weakly-entangled states in  representing the ground state of a quantum system $H$. More specifically, by instantiating one of the bases in \cref{thm:uncertainty} as the energy-eigenstate basis of $H$ and the other as a lowly-entangled basis, the theorem implies that certain classically easy states cannot reach very low energies. Such states include matrix product states (MPSs) and configuration interaction states (CISs). For our purposes, a CIS simply refers to a linear combination of a small number of orthogonal product states.

As a case study, we apply our results to the SYK model \cite{model_sy_1993,simple_kitaev_2015,hidden_kitaev_2015}. This well-studied model captures strongly interacting fermions, and we are interested in it as computing its low-energy states may be quantumly easy~\cite{syk_anschuetz_2025}.

The SYK model for $n$ Majorana modes with $r$-body interactions ($r$ even) is defined by the Hamiltonian
\begin{equation}
H_{r}^{\mathrm{SYK}} \coloneqq i^{r/2} \binom{n}{r}^{-1/2} \sum\limits_{j_1 < \dots < j_r} g_{j_1 \dots j_r} \gamma_{j_1} \dots \gamma_{j_r}, 
\end{equation}
where $\gamma_i=\gamma_i^\dag$, $\{\gamma_i,\gamma_j\} = 2\delta_{ij}$, and the $g_{j_1\ldots j_r}$ are i.i.d. real standard Gaussians. We will illustrate our results numerically with $r=4$ and $n=24$. In this case the Hamiltonian acts on $\mathbb{C}^N$, where $N \coloneqq 2^{n/2} = 4096$.

Let $\ket{\psi}$ be either an MPS of bond dimension $D$ or a CIS with $D$ terms. In either case, the entanglement entropy $\rmE(\psi)$ of $\ket{\psi}$ across a fixed cut is upper bounded by $\log D$. For a CIS, this follows from \cref{prop:es_upper} by setting $\beta = 1/2$ and the $\ket{f_j}$s to be a product-state basis, and using $\rmH(q) \leq \log D$ \footnote{If $\ket{\psi}$ is an MPS of bond dimension $D$, the bound $\rmE(\psi) \leq \log D$ also follows from \cref{prop:es_upper} by setting $\beta$ to be any number in $[1/2,1)$ and the $\ket{f_j}$s to be a basis that contains $\ket{\psi}$, and using $\rmE_\beta(\psi) \leq \log D$. Of course, it is much easier to see $\rmE(\psi) \leq \log D$ directly in this case.}.

Now let $\ket{e_i}$s be defined as the energy eigenbasis of the Hamiltonian. Since $\rmE(\psi) \leq \log D$, \cref{prop:es_lower} gives
\begin{equation}\label{eq:entropy_lower}
    \rmH(p) \geq (1-1/\alpha) \bigl(\textstyle\sum_i p_i \rmE_{\alpha}(e_i) - \log D\bigr),
\end{equation}
which shows that $\ket{\psi}$ must be uncertain in the energy eigenbasis. In particular, this implies that the support of $p$, written $\supp(p)$, must satisfy~\footnote{This can be seen as follows. Write $s\coloneqq 1-1/\alpha$, $S\coloneqq \supp(p)$, $w_i \coloneqq 2^{-s\rmE_{\alpha}(e_i)}$, and $W\coloneqq \sum_{i\in S} w_i$. Then $\log(1/D^s) \leq \rmH(p) - \sum_i p_i s\rmE_\alpha(e_i) = \log W - \mathrm{D}(p \| (w_i/W)_i) \leq \log W$, where the first inequality is \cref{eq:entropy_lower} and the last inequality is the nonnegativity of the KL-divergence $\mathrm{D}$. Therefore $W \geq 1/D^s$.}:
\begin{equation}
\textstyle \sum_{i\in \supp(p)} 2^{-(1-1/\alpha)\rmE_\alpha(e_i)} \geq 1/D^{1-1/\alpha},
\end{equation}
and so, relabelling such that $\rmE_\alpha(e_1) \leq \dots \leq \rmE_\alpha(e_N)$, the size of $\supp(p)$ is lower bounded by
\begin{equation}\label{eq:support_lower}
\min \{k \colon \textstyle\sum_{i=1}^k 2^{-(1-1/\alpha)\rmE_\alpha(e_i)} \geq 1/D^{1-1/\alpha}\}.
\end{equation}
To help parse the preceding equation, consider the $\alpha = \infty$ case. Then the equation says the support size of $p$ is lower bounded by the number of $2^{-\rmE_{\infty}(e_i)}$'s we need to add to exceed $1/D$. If all the $\rmE_{\infty}(e_i)$'s scale as $\Omega(n)$, this yields a large lower bound on the support size, namely $2^{\Omega(n)}/D$ \footnote{Note that this lower bound still holds even if a small number of $\rmE_{\infty}(e_i)$'s scale as $\Theta(n^c)$ for fixed $0<c<1$, and $D$ is small.}. For the SYK model specifically, Ref.~\cite{entanglement_huang_2019} argues that the $\rmE(e_i)$'s scale as $\Omega(n)$, which provides some evidence that the same may hold for the $\rmE_{\infty}(e_i)$'s.

Writing $E_1 \leq \dots \leq E_N$ for the energy eigenvalues, we can also lower bound the expected energy $\langle E\rangle_\psi$ of $\ket{\psi}$ by
\begin{equation}\label{eq:energy_lower}
     \min_{p} \sum_i p_i E_i
\end{equation}
subject to \cref{eq:entropy_lower} and $p$ being a probability distribution. This optimization can be performed numerically.

To use \cref{eq:support_lower,eq:energy_lower}, we instantiate them with the eigenstate entanglement Rényi entropies and energies of $H_4^{\mathrm{SYK}}$. For the entropies, we use Rényi order $\alpha \in \{2,4,\infty\}$ as examples. These data are plotted in \cref{fig:entropy_energy}. In \cref{fig:syk_constraints}, we plot the lower bounds for the support size and the energy obtained from \cref{eq:support_lower,eq:energy_lower}.

\begin{figure}\centering\includegraphics[width=0.825\linewidth]{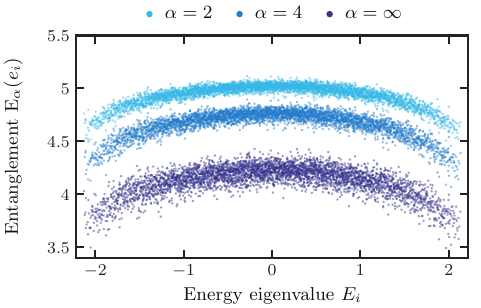}
\caption{Eigenstate entanglement Rényi  entropies versus energy for the $24$-mode $H_4^{\mathrm{SYK}}$ across a $12{:}12$ bipartition.}
\label{fig:entropy_energy}
\end{figure}

\begin{figure}\centering\includegraphics[width=0.825\linewidth]{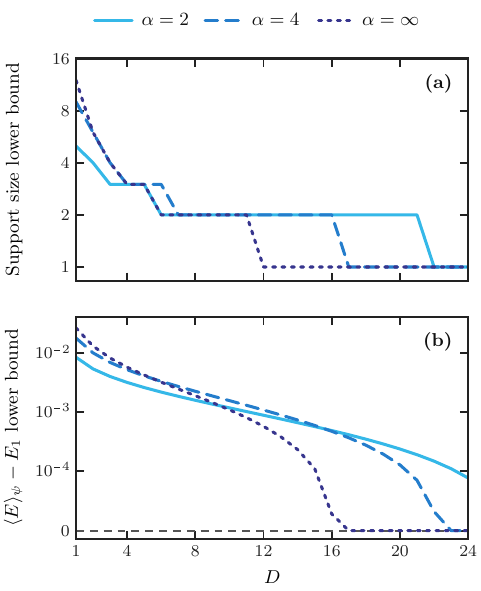}
\caption{Derived constraints on a state $\ket{\psi}$ with $\rmE(\psi) \leq \log D$ in representing the ground state of $H_4^{\mathrm{SYK}}$. (a) Lower bound on the number of energy eigenstates $\ket{\psi}$ is supported on, via \cref{eq:support_lower}. (b) Lower bound on the gap between the expected energy of $\ket{\psi}$ and the true ground state energy $E_1 = -2.127\dots$, via \cref{eq:energy_lower}.}
\label{fig:syk_constraints}
\end{figure}

We highlight that the ordering of the lines for $\alpha = 2, 4, \infty$ in \cref{fig:syk_constraints} is dependent on $D$. Since these lines represent lower bounds, the best lower bound is given by the highest line at each given $D$. The  fact that the lines are generally incomparable tells us that the bounds in \cref{prop:es_lower} for different values of $\alpha$ are incomparable.

\paragraph{Conclusion and future directions.} We showed that uncertainty in measurement outcomes in two bases can be induced by the mismatch in entanglement between the basis states. We obtained the result by proving new bounds on the entanglement of superpositions, and showed how it could be applied in a physics setting. 

Questions for future work include:
\begin{enumerate}
    \item Are there analogous uncertainty relations for other quantum resources, such as nonstabilizerness?
    \item Can we generalize our results to mixed states or multiple partitions, and is there a universal bound that subsumes ours for all values of $\alpha$ and $\beta$?
    \item Can we use these bounds to derive analytic lower bounds on, say, the entanglement entropy that a state needs to well-approximate the true ground state energy of the SYK model?
\end{enumerate}

\paragraph{Disclosure of AI use.}  
The main results of this work were conjectured and formulated by the authors, and were inspired by Refs.~\cite{uncertainty_maassen_uffink_1988,entanglement_superposition_linden_2006,entanglement_superposition_gour_2007,entanglement_superposition_gour_roy_2008,coles_entropic_2017,spectra_wang_2025}. We attribute the initial proofs of these results to ChatGPT, Claude, and Gemini. We verified these proofs and improved their presentation by simplifying and modularizing them, and by tracing their key ideas to the original sources. We used Codex to help generate and format \cref{fig:entropy_energy,fig:syk_constraints}. We used ChatGPT, Claude, and Gemini extensively to search the literature. In particular, GPT-6 Astra found that the $1/\alpha + 1/\beta = 2$ case of \cref{thm:uncertainty} follows from the main theorem in Ref.~\cite{entropic_rumin_2012} and Hölder's inequality for Schatten norms.

\bibliography{references}

\appendix
\crefalias{section}{appendix}
\onecolumngrid

\section{\texorpdfstring{Proof of \cref{lem:interpolation}}{}}\label{app:proof_interpolate}
\interpolation*

The proof of the lemma uses the three-lines lemma. For an accessible proof, see, e.g., Ref.~\cite[Lemma 2.2]{functional_analysis_stein_2011}.
\begin{lemma}[Three-lines lemma]\label{lem:three_lines}
Let $S\coloneqq \{x+iy \colon 0\leq x \leq 1, \, y\in \mathbb{R}\}$. Let $f\colon S \to \mathbb{C}$ be bounded and continuous on $S$, and analytic in the interior of $S$. Suppose $\abs{f(iy)}\leq m_0$ and $\abs{f(1+iy)}\leq m_1$ for all $y\in \mathbb{R}$. Then $\abs{f(x)} \leq m_0^{1-x} \cdot m_1^x$ for all $x\in [0,1]$.
\end{lemma}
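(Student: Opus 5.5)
The plan is the classical Phragmén–Lindelöf argument, carried out in three steps: normalize the boundary bounds, add a Gaussian damping factor so the maximum principle can be applied on a bounded region, and then remove the damping.

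\emph{Step 1: normalize.} Assume first that $m_0,m_1>0$. Define
\[
g(z) \coloneqq f(z)\, m_0^{z-1} m_1^{-z}, \qquad m^{z} \coloneqq e^{z\ln m}.
\]
This function is continuous and bounded on $S$ and analytic in its interior, because $\abs{m_0^{z-1}m_1^{-z}} = m_0^{x-1}m_1^{-x}$ depends only on $x=\mathrm{Re}\,z\in[0,1]$ and is therefore bounded. On the two boundary lines we get $\abs{g(iy)}\le 1$ and $\abs{g(1+iy)}\le 1$. So it suffices to show $\abs{g}\le 1$ on all of $S$. At real $x$ this reads $\abs{f(x)}\,m_0^{x-1}m_1^{-x}\le 1$, which is exactly the claim. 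If $m_0$ or $m_1$ is zero, I would run the argument with $m_0+\epsilon$ and $m_1+\epsilon$ and let $\epsilon\to0$.

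\emph{Step 2: damp and apply the maximum principle.} The difficulty is that $S$ is unbounded, so the maximum modulus principle does not apply directly. This is where the boundedness hypothesis on $f$ is used. For $\epsilon>0$ set $g_\epsilon(z)\coloneqq g(z)\,e^{\epsilon(z^2-1)}$. For $z=x+iy$,
\[
\abs{e^{\epsilon(z^2-1)}} = e^{\epsilon(x^2-y^2-1)} \le e^{-\epsilon y^2} \quad \text{on } S.
\]
Two consequences follow. First, $\abs{g_\epsilon}\le\abs{g}\le 1$ on $\partial S$. Second, since $g$ is bounded by some $K$, we have $\abs{g_\epsilon(x+iy)}\le K e^{-\epsilon y^2}$, which is below $1$ once $\abs{y}\ge Y_\epsilon$. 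Now apply the maximum modulus principle to $g_\epsilon$ on the rectangle $[0,1]\times[-Y,Y]$ with $Y\ge Y_\epsilon$. Every boundary point of the rectangle lies either on $\partial S$ or on a horizontal side with $\abs{y}=Y$, and in both cases $\abs{g_\epsilon}\le 1$. Hence $\abs{g_\epsilon}\le1$ inside the rectangle, and letting $Y$ grow gives $\abs{g_\epsilon}\le 1$ on all of $S$.

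\emph{Step 3: remove the damping.} Fix $z\in S$ and let $\epsilon\to0$. Then $e^{\epsilon(z^2-1)}\to1$, so $\abs{g(z)}\le1$. Undoing the normalization of Step 1 gives $\abs{f(x)}\le m_0^{1-x}m_1^{x}$.

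The only step that needs care is Step 2. The damping factor has to have modulus at most $1$ on the strip and decay as $\abs{y}\to\infty$, and $e^{\epsilon(z^2-1)}$ is the standard choice that does both. The rest is bookkeeping.
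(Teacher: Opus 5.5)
Your proof is correct, including the $\epsilon$-perturbation for $m_0=0$ or $m_1=0$ (with the usual convention $0^0=1$). The paper does not prove this lemma itself but cites Stein--Shakarchi, Lemma 2.2, and your argument is essentially that standard proof: normalize by $m_0^{z-1}m_1^{-z}$, damp by $e^{\epsilon(z^2-1)}$, apply the maximum principle on truncated rectangles, and let $\epsilon\to 0$.
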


\cref{lem:interpolation} can be viewed as a generalization of Stein's interpolation theorem given in Ref.~\cite[Theorem 2]{interpolation_stein_1956} to a linear transformation having noncommutative codomain, namely $c\mapsto \sum_i c_i M_i$. Generalization is possible because \cref{eq:variational_schatten} holds for both $\ell_s$ and Schatten-$s$ norms. Such generalization is well-known within noncommutative functional analysis, cf.~Ref.~\cite[Section 2]{noncommutativelp_pisier_xu_2003}. Related arguments have also appeared in quantum information theory, cf.~Ref.~\cite[Section 3]{trace_sutter_2016}.

\begin{proof}[Proof of \cref{lem:interpolation}]
    By absorbing any complex phase of $c_i$ into $M_i$, we may wlog assume that the $c_i$s are all nonnegative. We may wlog assume that $r$ lies strictly between $p$ and $2$, as it is easy to see that \cref{eq:interpolation} holds when $r\in \{p,2\}$. For convenience, write $m_i \coloneqq \norm{M_i}_p$. Now, let $\gamma \coloneqq \bigl(\sum_i c_i^{u(r)} m_i^{v(r)}\bigr)^{1/u(r)}$. If $\gamma = 0$, \cref{eq:interpolation} holds as both sides equal $0$. If $\gamma > 0$, by considering dividing every $c_i$ by $\gamma$, we see that we may wlog assume $\gamma = 1$. 
    
    Write $M\coloneqq \sum_i c_i M_i$, then \cref{eq:interpolation} is equivalent to $\norm{M}_r \leq 1$. To prove this, the main fact we will use is that 
    \begin{equation}\label{eq:variational_schatten}
        \norm{M}_r = \max \{ |\langle A, M\rangle|\colon  \norm{A}_{s} \leq 1 \},
    \end{equation}
    where $s$ is the Hölder conjugate of $r$, that is, $s$ is defined by $1/r + 1/s = 1$. (This appears as Eq.~(1.173) in Ref.~\cite{tqi_watrous_2018}.)

    Now fix a complex matrix $A$ with $\norm{A}_s\leq 1$. Write the SVD of $A$ as $A = \sum_i \sigma_i e_i f_i^\dagger$. For $z\in \mathbb{C}$, and with $p'$ denoting the Hölder conjugate of $p$, define
    \begin{equation}
        \quad M_z \coloneqq \sum_i c_i^{u(r) \cdot (1+z)/2} m_i^{v(r)/2 + z\cdot(v(r)/2-1)} M_i \quad \text{and} \quad A_z \coloneqq \sum_i \sigma_i^{s \cdot (\frac{1-z^*}{2} + \frac{z^*}{p'})}e_i f_i^\dagger,
    \end{equation}
    where $z^*$ denotes the complex conjugate of $z$,
    and consider the map 
    \begin{equation}
        f \colon \mathbb{C} \to \mathbb{C}; \quad z \mapsto \langle A_{z}, \,  M_z\rangle.
    \end{equation}
    
    Write $x \coloneqq \frac{p(2-r)}{r(2-p)}\in (0,1)$ so that $\frac{1}{r} = \frac{1-x}{2} + \frac{x}{p}$, $u(r) = \frac{2}{1+x}$, and $v(r) = x u(r)$. Then the definitions imply $f(x) = \langle A, M\rangle$. To bound $\abs{f(x)}$, we will bound $f$ on the two lines $L_0 \coloneqq \{iy \colon y \in \mathbb{R}\}$ and  $L_1 \coloneqq \{1+iy \colon y \in \mathbb{R}\}$, and then use \cref{lem:three_lines}. (The lemma applies since it is easy to check that $f$ is bounded and continuous on $S\coloneqq \{x+iy \colon 0\leq x \leq 1, \, y\in \mathbb{R}\}$, and analytic in the interior of $S$.)

    \begin{enumerate}
        \item For $z \in L_0$, we can use $\abs{f(z)} = | \langle A_z, \,  M_z\rangle| \leq \norm{A_z}_2\, \norm{M_z}_2$ and $\sum_i \sigma_i^s \leq 1$ and $\gamma = 1$ to see $\abs{f(z)}\leq 1$.
        \item For $z\in L_1$, we can use $\abs{f(z)} = |\langle A_z, \,  M_z\rangle| \leq \norm{A_z}_{p'} \, \norm{M_z}_p$ and $\sum_i \sigma_i^s \leq 1$ and $\gamma = 1$ to see $\abs{f(z)}\leq 1$.
    \end{enumerate}
    
    Therefore, \cref{lem:three_lines} implies $\abs{f(x)} \leq 1$. Since $A$ is arbitrary, \cref{eq:variational_schatten} implies that $\norm{M}_r \leq 1$, as required.
\end{proof}

\end{document}